\let\csname equation*\endcsname\relax
\let\csname endequation*\endcsname\relax
\newtheorem{theorem}{Theorem}
\newtheorem{corollary}{Corollary}
\newtheorem{lemma}{Lemma}
\begin{document}

\title[Wave functions of the Hydrogen atom in the momentum representation]{Wave functions of the Hydrogen atom in the momentum representation}

\author{M. Kirchbach$^a$ and J.\ A.\ Vallejo$^b$}

\address{$^a$Instituto de Fis{\'{i}}ca \\
Universidad Aut\'onoma de  San Luis Potos\'i (UASLP)\\
Av. Manuel Nava 6, SLP 78290, M\'exico\\
$^b$ Departamento de Matem\'aticas Fundamentales\\
Universidad Nacional de Educaci\'on a Distancia (UNED)\\
C. Juan del Rosal 10, CP 28040, Madrid, Spain}
\ead{mariana@ifisica.uaslp.mx,jvallejo@mat.uned.es}

\begin{abstract}
We construct the integral transform passing from the space representation to the
radial momentum representation for the Hydrogen atom. The resulting  wave functions
are explicitly given in terms of complex finite expansions of Gegenbauer functions
of the first and second kind, or in terms of (elementary) trigonometric functions.
We show their symmetry under the $SO(4)$ group, and their equivalence with those of
Lombardi and Ogilvie in \cite{Ogilvie}.
\end{abstract}

\vspace{2pc}
\noindent{\it Keywords}: Radial momentum, spherical wave kernel,  momentum  space  wave functions, Hydrogen atom

\section{Introduction}\label{secintro}

The aim of this paper is to study the construction of the wave functions of the Hydrogen atom
in the momentum representation, a topic which has been considered in \cite{PdlPlng}- \cite{Ogilvie}, among others.
To understand why this is a non-trivial exercise in basic
quantum mechanics, we need to recall some physical and mathematical ideas that will be
important in the following sections. However, as we are mainly focused on physical 
applications, we only sketch the mathematical details.

In the canonical quantization procedure pioneered by Dirac, one starts with a classical
Hamiltonian system which, for simplicity, we will assume to be $n-$dimensional and
having a phase space equal to $\mathbb{R}^{2n}$ (that is, we avoid constraints). More
generally, we could think of a $2n-$dimensional cotangent manifold $T^*M$, where $M$ is a 
configuration space.
In any case, the phase space is endowed with a canonical symplectic form $w$ which allows
the definition of things such as Hamiltonian vector fields $X_f$ associated to classical 
observables $f\in\mathcal{C}^\infty (T^*M)$ (that vector field is uniquely defined by the 
condition $w(X_f,\cdot )=-\mathrm{d}f$, where ${\mathrm d}$ is the exterior differential), and the
Poisson bracket of two classical observables $f,g\in \mathcal{C}^\infty (T^*M)$ (defined as
$\{f,g\}=w(X_f,X_g)$). Those systems admit a privileged system of so-called Darboux 
coordinates $(\mathbf{q},\mathbf{p})=(q^1,\cdots,q^n,p_1,\cdots,p_n)$, in which the
symplectic form becomes $w=\mathrm{d}q^i\wedge\mathrm{d}p_i$, equivalently, the Poisson
brackets between the coordinates (considered as functions on the phase space) are the
canonical ones: $\{q^j,p_k\}=\delta^j_k$.

Quantization is achieved by replacing the classical observables $f\in\mathcal{C}^\infty (T^*M)$
by self-adjoint operators $\hat{f}$ defined on some dense subspace of the Hilbert space of
square-integrable functions $L^2(M,\mathrm{d}\mu)$, with respect to some measure 
$\mathrm{d}\mu$ (to be constructed in the process), and the Poisson bracket by $i\hbar$ 
times the commutator 
between operators, $[\hat{f},\hat{g}]=\hat{f}\circ\hat{g}-\hat{g}\circ\hat{f}$. 
Thus, the Poisson bracket of the Darboux coordinates now becomes the well-known Heisenberg 
canonical commutation relation 
\begin{equation}\label{eq01}
[\hat{q}^j,\hat{p}_k]=i\hbar \delta^j_k I\,,
\end{equation}
where $I$ stands for the identity operator. Notice that this relation has to be valid only
in the common domain where it makes sense, and this could be the empty set \cite{Folland}.

Now, one can pose the question of what operators could be those $\hat{q}^j,\hat{p}_k$. Any
textbook on quantum mechanics presents an heuristic argument leading to the consideration of
the multiplication operator
\begin{equation}\label{eq02}
\hat{q}^j\psi(\mathbf{q})=q^j\psi(\mathbf{q})\,,
\end{equation}
and the differentiation operator
\begin{equation}\label{eq03}
\hat{p}_k\psi(\mathbf{q})=-i\hbar\frac{\partial\psi(\mathbf{q})}{\partial q^k}\,.
\end{equation}
That correspondence works well when the classical configurations space is $M=\mathbb{R}^n$, 
so the  Darboux coordinates are the Cartesian ones on $T^*M=\mathbb{R}^{2n}$, and the 
corresponding Hilbert space is $L^2(\mathbb{R}^n,\mathrm{d}^nx)$, the space of square-integrable 
functions with respect to the Euclidean measure. In that case, both
$\hat{q}^j$ and $\hat{p}_k$ are well-defined operators on the domain consisting, for instance,
of Schwartz-class functions $\mathcal{S}(\mathbb{R}^n)$, and admit unique self-adjoint
extensions (with which they are identified, respectively). 
The uniqueness of the representation \eref{eq02} and \eref{eq03} is then
guaranteed by the Stone-Von Neumann theorem: any other pair of operators satisfying
\eref{eq01}, are unitarily equivalent to \eref{eq02}, \eref{eq03}.

However, it is important to understand that things are drastically different if the 
configuration space is not all of $\mathbb{R}^n$, or even if this is the case but we choose 
coordinates different from the canonical Darboux (Cartesian) ones \cite{Essen}. This remark will be crucial 
later on.

A fundamental result in the theory of Hilbert spaces is the spectral theorem for self-adjoint 
operators, which basically states that given one of these, we can find a unitary 
equivalence in the Hilbert space such that the transformed operator acts as a multiplication 
operator, that is, the transformed operator is ``diagonal''. Each choice of an operator and the 
associated equivalence that diagonalizes it, is called a
\emph{representation} of the system. In the light of this result, we can say that the canonical
Dirac correspondence \eref{eq02}, \eref{eq03} determines the position representation, as the
operators $\hat{q}^j$ are multiplication operators. Of course, we could ask what is the momentum representation, that is: What is the unitary equivalence 
$\mathcal{F}:L^2(\mathbb{R},\mathrm{d}x)\to L^2(\mathbb{R},\mathrm{d}x)$ such that given the
operator $\hat{p}=-i\hbar\mathrm{d}/\mathrm{d}x$, the transformed operator
$\mathcal{F}\circ \hat{p}\circ\mathcal{F}^{-1}$ acts via the multiplication by identity? (We 
are considering the one-dimensional case for ease of notation).
The answer is, of course, that $\mathcal{F}$ is the well-known Fourier transform, defined as
\begin{equation}\label{eq04}
\mathcal{F}(\psi)(p)=\int_{\mathbb{R}}e^{-i\frac{p\cdot x}{\hbar}}\psi(x)\,\mathrm{d}x\,.
\end{equation}
It suffices to note that the desired result is equivalent to
\begin{equation}\label{eq05}
\mathcal{F}(\hat{p}\psi)(p)=p\cdot (\mathcal{F}\psi)(p)\,.
\end{equation}
This is proved by observing that unitary operators preserve the density of the domain and the 
closures, and integrating by parts (since the functions $\psi(x)$ in the domain of $\hat{p}$, 
vanish at infinity).

Expression \eref{eq05} suggests the following heuristic motivation for the introduction
of the Fourier transform: If we think of it as an integral transform whose kernel are the
``eigenfunctions'' of the momentum operator\footnote{Of course, that only makes sense at a 
formal level: The 
spectrum of \eref{eq03} is purely continuous and it possesses no eigenfunctions in
$L^2(\mathbb{R},\mathrm{d}x)$.} \eref{eq03}, the plane waves $e^{-i\frac{p\cdot x}{\hbar}}$,
then it is clear that the derivative inside the integral transform, after integration by parts 
so that it  acts on the exponential, will give us the multiplying $p$. Then, given that the 
integration is done with respect to the $x$ variable, $p$ can be factored outside the integral. 
We will generalize this idea in later sections.

Finally, after these admittedly long preliminaries, we can formulate the problem we are 
interested in, which is to compute the radial wave functions (using polar spherical 
coordinates) of the Hydrogen atom in the momentum representation. In Section \ref{sec2}
we will recall the correct form of the radial momentum in these coordinates, and we will
compute the unitary transformation which diagonalizes it, by means of an integral transform
whose kernel is given by its formal eigenfunctions. In Section \ref{sec3} we briefly review
the existing literature on this subject; in particular we review the Lombardi-Ogilvie approach, 
whose output is a set of wave functions that are not manifestly $SO(4)-$invariant. In 
\ref{sec4} we give explicit expressions  for the resulting radial functions of the Hydrogen 
atom in momentum space following our proposal, which have the advantage of being 
$SO(4)-$invariant; we also prove the equivalence with the functions of Lombardi-Ogilvie. 
Finally, in Section \ref{sec5} we present some conclusions and perspectives for future 
developments.

\section{The momentum representation for the Hydrogen atom}\label{sec2}

Here we face one of the 
difficulties we mentioned earlier, namely, the construction of the momentum operator in
coordinates other than the Cartesian ones. If polar spherical coordinates $(r,\theta,\varphi)$
are chosen, then the radial momentum can not be $p'_r=-i\hbar\partial /\partial r$. The 
reason is that this operator is not even symmetric, and, as a consequence, its expectation 
values calculated on Schr\"odinger's radial wave functions for a central potential are not real \cite{Utpal}.  In fact, one gets an expression like
\begin{eqnarray*}
\langle p_r'\rangle &=& i\hbar \int {\mathrm d}\Omega \left[
\int_0^\infty r^2\psi({\mathbf r}) \frac{\partial \psi^\ast ({\mathbf r})}
{\partial r} {\mathrm d}r\right]
+ 2i\hbar \int {\mathrm d}\Omega \int _0^\infty r^2 |\psi ({\mathbf r})|^2\,,
\end{eqnarray*} 
and the usual uncertainty relations do not hold in this case. This situation leads to
unwanted consequences such as a violation of the second law of thermodynamics \cite{Esther},
so it has to be avoided.

De Witt \cite{deWitt} has devised a scheme to find the correct momentum 
operator conjugate to an arbitrary coordinate function (that is, maintaining the Heisenberg
commutation relations \eref{eq01}): Given any $n-$dimensional Riemannian 
manifold $(M,g)$ (which in our case will be the Euclidean flat $3-$space $M=\mathbb{R}^3$), and 
a coordinate system $x^j$ on it, with $1\leq j \leq n$ (in our case the polar spherical
coordinates $(r,\theta,\varphi)$), the momentum operator corresponding to $x^j$ is given
by
\begin{equation}\label{eq06}
\hat{p}_j=-i\hbar\frac{\partial}{\partial x^j}-\frac{1}{2}i\hbar\sum^n_{k=1}\Gamma^k_{kj}\,,
\end{equation}
where $\Gamma^k_{jl}$ are the Christoffel symbols of the metric $g$ expressed in the $x^j$
coordinate system. In the case of the flat three-dimensional Euclidean metric in polar 
spherical coordinates, it is immediate to find that the Christoffel symbols of interest
are $\Gamma^\theta_{\theta r}=1/r$ and $\Gamma^\varphi_{\varphi r}=1/r$, so we get the
following expression for the radial momentum operator:
\begin{equation}\label{eq07}
\hat{p}_r=
-i\hbar\left( \frac{\partial}{\partial r}+\frac{1}{r}\right) =
-i\hbar\frac{1}{r}\frac{\partial}{\partial r}\circ r\,,
\end{equation}
see also \cite{Essen}.
From a mathematical point of view, the operator $\hat{\mathbf{p}}$ defined in \eref{eq06}
is not free from problems. 
Its natural domain of definition are the smooth functions of compact support in the punctured 
space $\mathbb{R}^3-\{0\}$, that is, $\mathcal{C}_c(\mathbb{R}^3-\{0\})$, and on this domain
it is a symmetric operator. However, it is not an observable, since its deficiency indices
are $(0,\infty)$ and, consequently, it does not have self-adjoined extensions (see \cite{Lax}).
Although, in particular, $\hat{p}_r$ is not an observable, its square $\hat{p}^2_r$ is (since, being  non-negative, it admits a Fredrich's extension).

Now we turn our attention to the Hydrogen atom. This is a particular case of a quantum 
system of two particles interacting
through a central potential. The wave function $\psi(\mathbf{q})$ describing the relative 
motion, satisfy the Such\"dingier equation
\[
(H\psi)(\mathbf{q})=\left[ -\frac{\hbar^2}{2\mu}\Delta +V(|\mathbf{q}|) \right]\psi(\mathbf{q})
=E\psi(\mathbf{q})\,,
\]
where $V(|\mathbf{q}|)$ is a radial function (that is, invariant under rotations), and $\mu$ is 
the reduced mass. Introducing polar spherical coordinates, and doing a little algebra, this 
equation becomes
\begin{equation}\label{eq08}
\left[\frac{\hat{p}^2_r}{2\mu}+\frac{\mathbf{L}^2}{2\mu r^2}+V(r)\right]\psi(r,\theta,\varphi)=
E\psi(r,\theta,\varphi)\,,
\end{equation}
where $\mathbf{L}$ is the relative orbital angular momentum, and $V(r)=-e^2/r$. By separation 
of variables, the solution to \eref{eq08} is a superposition of functions of the form
\[
\psi_{N\ell m}(r,\theta,\varphi)=R_{N\ell}(r)Y^m_\ell(\theta,\varphi)\,.
\]

In what follows, we will be interested in the resulting equation for $R_{N\ell}(r)$ (the radial 
invention), called the radial equation:
\begin{equation}\label{eq09}
\left[\frac{\hat{p}^2_r}{2\mu}+\frac{\ell (\ell+1)\hbar^2}{2\mu r^2}+V(r)\right] R_{N\ell}(r)
=E_NR_{N\ell}(r)\,,
\end{equation}
and our goal is: First, to find a unitary transformation 
$\mathcal{H}:L^2(]0,\infty [\,,r^2\mathrm{d}r)\to 
L^2(\mathbb{R},r^2\mathrm{d}r)$\footnote{Although $r$ lives in $]0,\infty[\,$, the radial momentum $p_r$ takes real values,
including negative ones.}
such that, given the operator $\hat{p}_r$, the transformed operator
$\mathcal{H}\circ \hat{p}_r\circ\mathcal{H}^{-1}$ acts via the multiplication by identity, that 
is, to find the momentum representation. Actually, we will not make use of the inverse transform, so we will prove \eqref{eq05} directly.
Second, we want to express the radial wave function
in this representation. We devote the remaining of this section to the first task, and the
remaining sections to the second.

\noindent
In order to find the transformation $\mathcal{H}$, we follow the heuristic idea exposed after
\eref{eq05}, and start by finding the (formal) eigenfunctions of $\hat{p}_r$, that is, for each $p_r\in \,]0,\infty [\,$, we want the solutions to  
$\hat{p}_r\psi_{p_r}(r) =\pm p_r\psi_{p_r}(r)$. As this equation is separable, we readily get 
\begin{equation}\label{eq10}
\psi^{\pm}_{p_r}(r)=\frac{e^{\pm i\frac{p_r\cdot r}{\hbar}}}{r}\,,
\end{equation}
which represent outgoing and incoming spherical waves.

\noindent\textbf{Remark:} \emph{In what follows, we will make free use of natural units for which
$\hbar=1$, for the sake of readability}.

Notice that, outgoing and incoming spherical waves can be 
expressed in terms of spherical Hankel functions of the first $h^{(1)}_0(x)$ and second $h^{(2)}_0(x)$ kind,  respectively given by  
\[
h^{(1)}_0(x)=-i\frac{e^{ix}}{x}\,, \quad h_0^{(2)}(x)=i\frac{e^{-ix}}{x}\,.
\]
For instance, we have:
\[
\psi^{-}_{p_r}(r)=\frac{e^{- ip_r r}}{r}=-ip_rh^{(2)}_0(p_r r)\,.
\]

Now, we can state the main result of this section. For the sake of clarity, 
we drop the subindex $r$ in the radial momentum, writing
simply $p$ instead of $p_r$. 

\begin{theorem}\label{theorem1}
Define the transformation 
$\mathcal{H}:L^2(]0,\infty [\,,r^2\mathrm{d}r)\to L^2(\mathbb{R},r^2\mathrm{d}r)$ by
\begin{equation}\label{eq11}
(\mathcal{H}\phi)(p)=\int^\infty_0 \phi(r)\psi^-_{p}(r)r^2\mathrm{d}r\,.
\end{equation}
Then, it holds
\begin{equation}\label{eq12}
\mathcal{H}(\hat{p}\phi)(p)=p\cdot (\mathcal{H}\phi)(p)\,.
\end{equation}
\end{theorem}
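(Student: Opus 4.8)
The plan is to follow the heuristic motivation for the Fourier transform given after \eref{eq05}, now with the spherical-wave kernel $\psi^-_p(r)=e^{-ipr}/r$ in place of the plane wave. Since $\psi^-_p$ is a formal eigenfunction of $\hat{p}$ with eigenvalue $p$ (as noted before \eref{eq10}), the strategy is to integrate by parts so that the derivative hidden in $\hat{p}$ is transferred onto the kernel, thereby producing the multiplying factor $p$, which can then be pulled out of the radial integral because the integration runs over $r$.

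First I would insert the second, \emph{factored} form of the radial momentum from \eref{eq07}, namely $\hat{p}=-i\,r^{-1}\frac{\partial}{\partial r}\circ r$, into the definition \eref{eq11}. Writing $(\hat{p}\phi)(r)=-i\,r^{-1}\frac{d}{dr}(r\phi(r))$ and noting that the kernel contributes $\psi^-_p(r)\,r^2=r\,e^{-ipr}$, the explicit powers of $r$ cancel against the $r^{-1}$ in $\hat{p}$, leaving
\begin{equation*}
\mathcal{H}(\hat{p}\phi)(p)=-i\int_0^\infty \frac{d}{dr}\bigl(r\phi(r)\bigr)\,e^{-ipr}\,\mathrm{d}r\,.
\end{equation*}
This algebraic cancellation is precisely the feature that singles out the De Witt form \eref{eq07} (rather than the naive $-i\partial_r$) as the operator compatible with this integral transform.

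Next I would integrate by parts in this single integral, taking $u=r\phi(r)$ and letting the derivative act on $v=e^{-ipr}$, which brings down the factor $-ip$. The boundary contribution $\bigl[\,r\phi(r)e^{-ipr}\,\bigr]_0^\infty$ vanishes: at the origin because $r\phi(r)\to 0$, and at infinity because $\phi$ has compact support, taking $\phi$ in the dense domain of smooth functions compactly supported in $]0,\infty[\,$, on which $\hat{p}_r$ is symmetric (cf.\ the discussion after \eref{eq07}). What remains is
\begin{equation*}
\mathcal{H}(\hat{p}\phi)(p)=p\int_0^\infty \phi(r)\,e^{-ipr}\,r\,\mathrm{d}r
= p\int_0^\infty \phi(r)\,\psi^-_p(r)\,r^2\,\mathrm{d}r
= p\cdot(\mathcal{H}\phi)(p)\,,
\end{equation*}
where the middle integral is recognized as $(\mathcal{H}\phi)(p)$ by \eref{eq11}, establishing \eref{eq12}.

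As in the Fourier case, the main obstacle is not the formal calculation but its rigorous justification: pinning down a dense domain on which the boundary terms genuinely vanish and the identity holds, and then extending \eref{eq12} from that domain to all of $L^2(]0,\infty[\,,r^2\mathrm{d}r)$. I would dispose of this exactly as the excerpt indicates after \eref{eq05}, invoking that unitary operators preserve the density of domains and the closures of operators, so that verifying \eref{eq12} on the dense set of functions supported away from the origin suffices.
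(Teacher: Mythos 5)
Your proof is correct and follows essentially the same route as the paper's: both reduce $\mathcal{H}(\hat{p}\phi)(p)$ to the one-dimensional integral $-i\int_0^\infty \frac{d}{dr}\bigl(r\phi(r)\bigr)e^{-ipr}\,\mathrm{d}r$ on compactly supported $\phi$, integrate by parts with vanishing boundary terms, and extend by density, your direct differentiation of $e^{-ipr}$ being the same computation as the paper's transfer of the derivative onto $r\psi^-_p(r)$. One cosmetic slip in your preamble: with the paper's conventions one has $\hat{p}\,\psi^-_p=-p\,\psi^-_p$ (eigenvalue $-p$, not $p$; the extra sign is absorbed by the integration by parts, which is why your final bookkeeping still comes out right).
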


\begin{proof} First of all, notice that $\mathcal{H}$ acts as
\[
(\mathcal{H}\phi)(p)=-ip\int^\infty_0 \phi(r)h^{(2)}_0(pr)r^2\mathrm{d}r=
\int^\infty_0 \phi(r)e^{-ipr}r\mathrm{d}r\,.
\]
Written in this form, it is clear that it can be considered as a Fourier 
transform on a weighted Lebesgue space $L^1(]0,\infty[\,,r\mathrm{d}r)$, as the weight 
$u(r)=r$ is a locally integrable, positive function on $]0,\infty[\,$\footnote{Yet another 
point of view is that $(\mathcal{H}\phi)(p)$ is the Fourier transform of the Borel 
measure $\phi(r)r\mathrm{d}r$ on $]0,\infty[\,$. The appearance of the Hankel function 
$h^{(2)}_0$ is no surprising at all, since it is well known that the ordinary Fourier transform 
of a radial function is given by a Hankel transform.}. It is also apparent that the Schwartz
class $\mathcal{S}(]0,\infty[\,)$ is included in the domain of $\mathcal{H}$ (notice that
the vector space of Schwartz functions is closed under multiplication by polynomials), so
we can work in this subspace (or even in the smaller subspace of functions with compact 
support) and apply a standard density argument to extend the definition
to the broader space $L^2(]0,\infty [\,,r^2\mathrm{d}r)$. The proof that $\mathcal{H}$ is
unitary is also standard. Finally, we integrate by parts to show that \eref{eq05} holds
for any $\phi$ of compact support on $]0,\infty[\,$:
\begin{align*}
\mathcal{H}(\hat{p}\phi)(p) &= 
\int^\infty_0 -\frac{i}{r}\frac{\partial}{\partial r}(r\phi(r))\psi^-_{p}(r)r^2\,\mathrm{d}r \\
&= \int^\infty_0 -i\frac{\partial}{\partial r}(r\phi(r))\psi^-_{p}(r)r\,\mathrm{d}r \\
&=\int^\infty_0 r\phi(r)\left( i\frac{\partial}{\partial r}\right)(r\psi^-_{p}(r))\,\mathrm{d}r \\
&= -\int^\infty_0\phi(r)\left( -\frac{i}{r}\frac{\partial}{\partial r}\right)(r\psi^-_{p}(r))\,r^2\mathrm{d}r \\
&= -\int^\infty_0\phi(r)(\hat{p}\psi^-_{p})(r)r^2\,\mathrm{d}r \\
&=p\int^\infty_0\phi(r)\psi^-_{p}(r)r^2\,\mathrm{d}r \\
&= p(\mathcal{H}\phi)(p)\,.
\end{align*}
\end{proof}

\section{H Atom  wave functions in the momentum representation: Previous results}\label{sec3}

Before considering the problem of writing the solutions to \eref{eq09} in the momentum
representation, given by the transform $\mathcal{H}$ in \eref{eq11} of the
preceding section, let us consider previous approaches to this problem.

\subsection{The functions of Podolsky and Pauling}\label{sec3.1}

Podolsky and Pauling \cite{PdlPlng} (and later also Hylleraas \cite{Hylleraas}) tried to find 
solutions $G_{N\ell m}\left(\mathbf{p}\, \right)$ to \eref{eq09} by considering the standard  Fourier transform as the 
unitary transformation that diagonalizes it, thus arriving at the formulae,
\begin{align*}
G_{N\ell m}(\mathbf{p}\, )&=\frac{1}{(2\pi\hbar )^{\frac{3}{2}}}\int_0^\infty  
e^{ \frac{i}{\hbar} \mathbf{p}\cdot \mathbf{r}} {\mathcal R}_{N\ell m} (\mathbf{r}\, ){\mathrm d}\mathbf{r}\\
&= {\mathcal G}_{N\ell }(p)Y_\ell ^m(\theta_p,\varphi_p),
\end{align*}
where ${\mathcal R}_{N\ell m}(\mathbf{r}\, )$ denote the usual wave functions of the Hydrogen
atom in the position representation, and $Y_\ell ^m(\theta_p,\varphi_p)$ are the well known spherical harmonics. 

One method for evaluating the above integral consists in expanding the plane wave as
\begin{equation*}
e^{i\frac{p}{\hbar} r\cos\theta }=
\sum_0^\infty \sum_{m=-\ell'}^{m=\ell'}i^{\ell'} (2\ell' +1)j_{\ell'} (pr)P_{\ell'}(\cos\theta)\,,
\label{Raylay}
\end{equation*}
which leads to the Fourier transform expanding into a series of integral transforms whose kernels are spherical Bessel functions of every order, and to a factorization in terms of the variables $pr$ and $\cos\theta$.
Had we considered instead the radial component of the momentum as a variable, naively taken as the classical value  $p_r=p\cos\theta$, we would have arrived at
\begin{equation}
e^{i\frac{p_r}{\hbar}r}  =
-i\frac{p_rr}{\hbar}\left[j_0\left(\frac{p_r}{\hbar }r \right)-in_0\left(\frac{p_r}{\hbar }r \right) \right]\,,
\end{equation}
that is, to a kernel with spherical Bessel functions of the first and second kind, and of zeroth order. We discuss this in more detail after eq.~(\ref{Olg_33}) below. 
Irrespective of the techniques applied, the result is known, and here
we quote reference \cite{Szmytkowski}, where the 
radial wave functions in the momentum representation 
have been worked out for the general case of any dimension. Atoms in higher dimensions have unusual properties that can serve as signature of extra dimensions in quantum mechanics, as they are predicted by grand unification and string theories.  For the three dimensional case
at hand, the radial wave functions ${\mathcal G}_{N\ell }(p)$ turn out to be 
\begin{equation}
{\mathcal G}_{N\ell }(p)=(2\hbar \beta)^{\frac{5}{2}}\Gamma(\ell +1)
\sqrt{  \frac{ (N-\ell -1)!N!}{\pi (N+\ell )!}}
\frac{(4\hbar \beta p)^\ell}{(\hbar^2\beta^2 +p^2)^{\ell +2}}C^{\ell +1}_{N-\ell -1}\left(\frac{\hbar^2 \beta^2-p^2}{\hbar^2 \beta^2+p^2}\right)\,.
\label{Szmytk}
\end{equation}
Here, $\beta=Z\mu c\alpha /N\hbar$ is a constant given in terms of the fine structure constant 
$\alpha$, the reduced mass $\mu$, and the atomic number $Z$, so $\hbar \beta $ equals  $q_n$  
in eq.~(4.7) in \cite{Szmytkowski}. The symbols $C_n^\alpha(x)$ stand for Gegenbauer polynomials. After
a change of variable in their argument, given by
\begin{equation}
\cos\chi_p =\frac{\hbar^2 \beta^2-p^2}{\hbar^2\beta^2+p^2}, \quad \sin\chi_p =\frac{2\hbar \beta p}{\hbar^2\beta^2 +p^2}, \quad 
\chi_p \in \left[0,\pi  \right]\,,
\label{Cnvrs_cs}
\end{equation}
the wave functions in the momentum representation obtained in this way, known as Podolsky-Pauling functions, read
\begin{eqnarray}
{\mathcal G}_{N\ell }(\chi_p)
&=&(2\hbar \beta) ^{\frac{1}{2}} 2^{\ell+2}
\sqrt{  \frac{ (N-\ell -1)!N!}{\pi (N+\ell )!}}
\Gamma(\ell +1) \cos^2\frac{\chi_p}{2}{\mathcal S}_{(N-1)\ell }(\chi_p),\nonumber\\
{\mathcal S}_{(N-1)\ell }(\chi_p)&=&\sin^\ell \chi_p C_{N-\ell -1}^{\ell +1}(\cos \chi_p).
\label{Ultra_Sphr_HAt}
\end{eqnarray}
Here, ${\mathcal S}_{(N-1)\ell }(\chi_p)$ coincides with the second polar-angle part of the ultra-spherical harmonics, 
$Y_{K\ell m }(\chi_p, \theta_p,\varphi_p)={\mathcal S}_{K\ell }(\chi_p)Y_\ell ^m(\theta_p,\varphi_p)$, with  $K=N-1$ denoting the four-dimensional angular momentum value. 

The great advantage of these functions is that they determine bases of the irreducible 
representations of the orthogonal group $SO(4)$.  Indeed, it is clearly possible to factor
the ultra-spherical harmonics $Y_{(N-1)\ell m}(\chi_p,\theta_p,\varphi_p)$ in the expression
for the full wave functions $G_{N\ell m}(\mathbf{p}\, )$, thus explicitly showing the
full $O(4)$ symmetry of the Hydrogen atom in the momentum representation.

It is important to point out that the Podolsky-Pauling functions depend on $p^2 =p_x^2+p_y^2+p_z^2$, the squared (Euclidean) norm of the linear momentum, and not on the genuine radial momentum , $p_r$, which is the  projection of  ${\mathbf p}$ on the direction
of ${\mathbf r}$, and can naively be thought in terms of its classical expression,  
$p_r=p\cos\theta$. 
In that sense, the Podolsky-Pauling functions seem conjugate to the radial Hydrogen atom wave functions when expressed in terms of $r^2=x^2+y^2+z^2$, rather than to the radial coordinate itself (see the discussion in subsection \ref{sec44}, regarding the corresponding uncertainty
relations).

\subsection{The functions of Lombardi and Ogilvie}\label{sec3.2}
 Lombardi \cite{Lombardi} and Ogilvie \cite{Ogilvie}
worked out  the  total momentum space wave function, factorized as $\alpha({\hat{\mathbf p}})=\alpha (p_r)\beta (\theta)\rho(\varphi)$, as follows:  Starting with the radial equation \eref{eq09}, they made the replacements
\begin{eqnarray}
-i\frac{1}{r}\frac{\partial}{\partial r} r &\longrightarrow&  p_r,\
\nonumber\\
\frac{1}{r}&\longrightarrow&-i\int_{-\infty}^{p_r} {\mathrm d}{p^\prime _r}:=I,
\label{2nd_rplsmnt}\\
R_{N\ell}(r)&\longrightarrow & \alpha_{N\ell}( p_r),
\nonumber
\end{eqnarray}
and the resulting second-order integral equation (expressed in natural units of $\hbar =1$ 
(see eq.~(32) in \cite{Ogilvie})),
\begin{eqnarray}
{\Big(}p_r^2-2\mu E +\ell (\ell+1)I^2  +2\mu Ze^2 I {\Big)}\alpha_{N\ell}(p_r)=0,
\label{Lmbrd1}
\end{eqnarray} 
was solved by taking its second-order derivative and converting it to 
a differential equation in the $p_r$ coordinate.
The solutions to (\ref{Lmbrd1}) are then given in eq.~(33) in \cite{Ogilvie}, and can be rewritten\footnote{For example, identifying $\beta$ here with $p_0/N$ in \cite{Ogilvie}.}, for the sake of comparison, with the notations of the preceding section, yielding
\begin{eqnarray}
\alpha_{N\ell}(p_r)&=&{\mathcal N}^L_{N\ell}\sum_{k=0}^{N-\ell-1}   c_{N\ell}^k
 \left(\frac{i\beta }{p_r -i\beta}\right)^{\ell +k+2}\nonumber\\
&=&{\mathcal N}^L_{N\ell}\sum_{k=0}^{N-\ell-1}   c_{N\ell}^k
\left(
\frac{i\beta \left( p_r+i\beta\right) }{
p_r^2 +\beta^2}\right)^{\ell +k+2}\nonumber\\[6pt]
c_{N\ell}^k&=&  \frac{2^k(N -\ell-1)! (\ell +k+1)!}{
k!(N-\ell -k-1)!(2\ell +k+1)!},
\label{Olg_33}
\end{eqnarray} 
Notice that here, following Weyls's symmetrization recipe, $
p_r=\frac{1}{2}( {\mathbf p}\cdot \frac{{\mathbf r}}{r}  +\frac{{\mathbf r}}{r}\cdot {\mathbf p})$. 
The normalization constant, here denoted by  ${\mathcal N}^L_{N\ell}$, can be consulted  in eq.~(33) in \cite{Ogilvie}.

These calculations present several problems from a mathematical point of view, the most
important one coming from the replacement \eref{2nd_rplsmnt}, which needs a careful study
of the domain in which is valid. From a physical perspective, the wave functions of Lombardi and Ogilvie, even being deduced from the correct radial momentum operator, have the
drawback of not being explicitly $SO(4)-$invariant, a fact which perhaps is the explanation
of why they are not more widespread.

In the following section, we calculate the  $\psi_{N\ell}(p_r)$ functions  from the integral transform in Theorem 1, finding two new equivalent, although different looking, representations. We will prove that these two representations are necessarily equivalent to the $\alpha_{n\ell}(p_r)$ functions above.

As to the angular part of the wave functions, they have been constructed in \cite{Lombardi}, \cite{Ogilvie}
as eigenfunctions of the operator ${\hat p}_\theta $ given by
\begin{equation}
 r{\hat p}_\theta = -i\hbar \frac{1}{\sqrt{\sin\theta }}\frac{\partial }{\partial \theta}\sqrt{\sin\theta}=-i\hbar \left( \frac{\partial }{\partial \theta }+\frac{1}{2}\cot\theta \right), \quad \theta\in [0,\pi],
\label{PthetaHerm}
\end{equation}
which is transformed by means of the unitary operator  
$t=\exp (i\pi u)$ making the change of variable $u=\cos\theta$, in which case ${\hat p}_\theta$ becomes 
${\hat p}_t=-i\partial /\partial u=\pi t\partial /\partial t$. Obviously, the angular wave
functions so obtained are essentially distinct from the associated Legendre function, $P^m_\ell (\cos\theta)$,  appearing in \eref{Ultra_Sphr_HAt}. In a way similar, the azimuthal wave 
functions are handled by setting $v=\exp(i\varphi)$, leading to the operator
${\hat p}_v=v\partial/\partial v$. In fact, the complex functions by Lombardi and Ogilvie are 
quite different from the real ones  by Podolsky and Pauling. 
In the following subsection, we briefly consider the possibility that the angular part of the 
momentum space wave functions may not be those given in the preceding works by Podolsky-Pauling
or Lombardi-Ogilvie, so other parametrizations may be eligible as well. 

\subsection
{The angular functions of Liu, Tang and Xun for a particle on the sphere}

The angular parts of the wave functions in momentum space, for a motion on a sphere, can be constructed using Dirac's theory of constraints and the associated star-bracket.
In so doing, one defines the geometric momentum, 
\begin{equation}
{\hat p\,}=-i\hbar \left( \nabla +H{\mathbf n}\right),
\end{equation}
where $\nabla $ denotes the local gradient operator on the sphere,  $H$ denotes the mean curvature of the sphere as embedded in three dimensional Euclidean space (an extrinsic measure of curvature), and ${\mathbf n \, }$ is the normal vector (eqs.~(3)-(4) in \cite{QHLiu}). Then, the most general form of its $z$ component, 
denoted by ${\hat p}_{(\alpha,\beta)z}$ in eq.~(8) in \cite{QHLiu}, is obtained as, 
 \begin{equation}
{\hat p}_{(\alpha, \beta)z} =-\frac{i\hbar }{r}\left( -\sin\theta \frac{\partial }{\partial \theta} -\alpha \cos\theta \right) +\frac{\hbar }{r}\beta \cos\theta.
\label{Liu_pz}
\end{equation}
It is further argued in \cite{QHLiu} that the values of $\alpha=1$, $\beta=0$ are particularly appropriate (see (33) there). For these parameter values, ${\hat p}_{(\alpha,\beta)z}$ takes the form,
\begin{equation}
{\hat p}_{(\alpha=1,\beta=0) z} = -\frac{i\hbar }{r}\left( -\sin\theta \frac{\partial }{\partial \theta} - \cos\theta \right)=\frac{i\hbar}{r}\sin\theta\left(
\frac{\partial }{\partial \theta } +\cot\theta \right),
\label{pz_alpha1_beta0}
\end{equation}
which is clearly distinct from (\ref{PthetaHerm}).
The corresponding eigenfunctions, reported (modulo the normalization factor) in eq.~(47) in \cite{QHLiu}, are,
\begin{equation}
\psi_z(\theta)=\frac{1}{\sqrt{2\pi\hbar}  \sin\theta}\tan ^{-ip_z} \left( \frac{\theta}{2}\right)=\frac{1}{\sqrt{2\pi \hbar}  \sin\theta }e^{-ip_z\ln \tan \frac{\theta}{2}}\,.
\end{equation}

Thus, the scheme of Dirac's second-class constrained quantization, as applied to the Hydrogen
atom, gives angular wave functions different from those in \cite{Lombardi}, \cite{Ogilvie}.
In this regard, it has to be noticed that the construction in this section depends on the particular embedding of the sphere in $\mathbb{R}^3$ one chooses (through the normal vector field $\mathbf{n}$), and it has been suggested that the corresponding
curvature effects can be detected by specially designed momentum spectrometers \cite{QuiNew,Das}.
In this sense, which functions are best suited to describe the actual Hydrogen atom is a question that
should be decided on the basis of such experiments.

\section{Radial wave functions in the momentum representation: A proposal}\label{sec4}

In this section we apply the integral transform  in Theorem 1 to the study of the radial
wave functions of the Hydrogen atom in the momentum representation. Because the wave functions 
are defined up to a phase, in the following we shall ignore the imaginary factor in front of 
$h^{(2)}_0(p_rr/\hbar)$, but we will recover the $\hbar$.

\subsection{Complex finite expansions in terms  of Gegenbauer functions}

Let $R_{Nl}(r)$ be a radial wave function, solution to \eqref{eq09}. Notice that it has a
dependence on the energy $E$ which, for bounded states, can be parametrized by an integer $N$.
In what follows, it will be necessary to make this dependence explicit, hence we will
introduce a new variable $\rho =2\beta r$, where $\beta=Z\mu c\alpha /N\hbar$ has been already
defined in subsection \ref{sec3.1}. Consider now the radial function 
$\phi (r)=R_{N\ell }(2\beta r)=R_{N\ell}(\rho)$. It is well known (see, for instance, 
\cite{Weinberg}) that it can be expressed in terms of the Laguerre polynomials $L^a_n(\rho)$ as
\begin{equation}\label{Ur1}
R_{N\ell}(\rho)=\mathcal{N}_{N\ell}e^{-\rho/2}\rho^\ell L^{2\ell +1}_{N-\ell -1}(\rho)\,,
\end{equation}
where the normalization constants $\mathcal{N}_{N\ell}$ (different from the ${\mathcal N}^L_{N\ell}$ of subsection \ref{sec3.2}) are given by
\[
\mathcal{N}_{N\ell}=(2\beta)^{3/2}\sqrt{\frac{(N-\ell -1)!}{2n\cdot (N+\ell)!}}\,.
\]

The transformed radial wave function
in momentum space, according to \eqref{eq11}, can be computed as 

\begin{eqnarray*}
\psi_{N\ell}(p_r)&=& \frac{p_r}{\hbar }\int_0^\infty h^{(2)}_0\left( \frac{p_r}{\hbar }r \right) R_{N\ell }(\rho) r^2{\mathrm d}r\nonumber\\
&=& \frac{p_r}{\hbar }
\int_0^\infty \left( j_0\left( \frac{p_r}{\hbar}r\right) -  in_0
\left(\frac{p_r}{\hbar }r \right) \right)R_{N\ell}(\rho) r^2{\mathrm d}r.
\end{eqnarray*}
We will separately study the integrals appearing here, so we define $I_1$ and $I_2$ as,
\begin{eqnarray}
I_1\left(\frac{p_r}{2\hbar \beta } \right)&=&\frac{1}{{\mathcal N}_{N\ell}}\int_0^\infty j_0\left( \frac{p_r}{\hbar }r \right)R_{N\ell }(\rho) r^2{\mathrm d}r\nonumber\\
&=&\frac{1}{(2\beta)^3}\int_0^\infty j_0\left( \frac{p_r}{2\beta \hbar }\rho \right)e^{-\frac{\rho}{2}} L_{N-\ell -1}^{2\ell +1}(\rho )\rho ^\ell \rho^2{\mathrm d}\rho,
 \label{Int_1st}
\end{eqnarray}
and 
\begin{eqnarray}
I_2\left(\frac{p_r}{2\hbar \beta } \right)&=&\frac{1}{{\mathcal N}_{N\ell}}\int_0^\infty n_0\left( \frac{p_r}{\hbar }r \right)R_{N\ell }(\rho) r^2{\mathrm d}r, \nonumber\\
&=&\frac{1}{(2\beta)^3}\int_0^\infty n_0\left( \frac{p_r}{2\beta \hbar }\rho \right)e^{-\frac{\rho}{2}} L_{N-\ell -1}^{2\ell +1}(\rho ) \rho ^\ell \rho^2{\mathrm d}\rho,
\label{Ur6}
\end{eqnarray}
respectively, and rewrite the wave function under consideration as,
\begin{eqnarray}
\psi_{N\ell}(p_r) &=&  \frac{p_r}{\hbar}{\mathcal N}_{N\ell}\left(I_1\left(\frac{p_r}{2\hbar \beta } \right)- iI_2\left(\frac{p_r}{2\hbar \beta } \right)\right).
\label{Ur7_pr}
\end{eqnarray}

In order to evaluate $I_1\left(\frac{p_r}{2\hbar \beta } \right)$ and $I_2\left(\frac{p_r}{2\hbar \beta } \right)$, we express the Hydrogen atom wave functions, based on Laguerre's generalized polynomials, in terms of finite sums of Slater-type functions. To this end,
we first make use of the well-known expression for the Laguerre polynomials,
\begin{eqnarray}
L_n^\alpha (\rho)&=&\sum_{t=0}^n  (-1)^t \left( \begin{array}{c} 
n +\alpha \\ 
n-t \end{array}\right)
\frac{\rho^t}{t!},
\label{Ur4}
\end{eqnarray}
which we then substitute into the equations ~(\ref{Int_1st}), and (\ref{Ur6}). In so doing we find, 
\begin{eqnarray}
I_1\left(\frac{p_r}{2\hbar \beta} \right)&=& \sum_{t=0}^{N -\ell -1} \binom{N+\ell}{N-\ell -1 -t} \frac{(-1)^t}{{t! (2\beta)^3}}\int_0^\infty
j_0\left( \frac{p_r}{2 \hbar \beta  }\rho \right)e^{-\frac{\rho }{2}}
\rho ^{\ell +t +2}{\mathrm d}\rho \nonumber\\
&=& \sum_{t=0}^{N -\ell -1} \binom{N+\ell}{N-\ell -1 -t} \frac{(-1)^t}{{t! (2\beta)^3}}\sqrt{\frac {\beta\hbar  \pi}
{p_r}} {\mathcal I}_1\left(\frac{p_r}{2\hbar \beta} \right),
\label{Ur7_a}
\end{eqnarray}
where ${\mathcal I}_1\left(\frac{p_r}{2\hbar \beta} \right)$ is an integral of the 
Hankel transform type (with a kernel given by the Bessel function of the first kind, $J_{1/2}(\rho )$), given by 
\begin{eqnarray*}
{\mathcal I}_1\left( \frac{p_r}{2\hbar \beta}\right)
&=& \int_0^\infty
J_{1/2}\left( \frac{p_r}{2 \hbar \beta  }\rho \right)
e^{-\frac{\rho }{2}}{\rho ^{\ell +t +\frac{3}{2}}} {\mathrm d}\rho\,.
\end{eqnarray*}
Analogously, the second integral is evaluated as,
\begin{eqnarray}
I_2\left(\frac{p_r}{2\hbar \beta} \right)&=&\sum_{t=0}^{N -\ell -1}\binom{N+\ell}{N-\ell -1 -t}  \frac{(-1)^t}{{t! (2\beta)^3}}\sqrt{\frac {\beta \hbar\pi}
{p_r}}{\mathcal I}_2\left(\frac{p_r}{2\hbar \beta} \right),
\label{Ur8_a}
\end{eqnarray}
where ${\mathcal I}_2\left(\frac{p_r}{2\hbar \beta } \right)$ is another integral of the 
Hankel transform type (with a kernel given by the Bessel function of the second kind, $Y_{1/2}(\rho)$), given by
\begin{eqnarray*}
{\mathcal I}_2\left(\frac{p_r}{2\hbar \beta} \right)
&=&\int_0^\infty
Y_{1/2}\left( \frac{p_r}{2 \hbar \beta  }\rho \right)e^{-\frac{\rho }{2}}
 {\rho ^{\ell +t +\frac{3}{2}}}{\mathrm d}\rho\,.
\end{eqnarray*}
The aforementioned Hankel transforms, 
${\mathcal I}_1\left( \frac{p_r}{2\hbar \beta }\right)$ and 
${\mathcal I}_2\left(\frac{p_r}{2\hbar \beta } \right)$, are tabulated in \cite{RyzGrad} (eq.~{\bf 6.621} on p.~699, and in {\bf 6.622} on p.~700, respectively). Therefore, we obtain
\begin{eqnarray*}
{\mathcal I}_1\left( \frac{p_r}{2\hbar \beta}\right)&=&\left(
2x\right)
^{\left(\ell +\frac{5}{2} +t\right) }\Gamma\left(\ell +3 +t \right)  P_{\ell +\frac{3}{2}+t} ^{-\frac{1}{2}}
\left(x \right)\,,
\end{eqnarray*}
and
\begin{eqnarray*}
{\mathcal I}_2\left(\frac{p_r}{2\hbar \beta} \right) &=&- \frac{2}{\pi} (2x)^{\left(\ell +\frac{5}{2} +t\right) }
\Gamma\left(\ell +3 +t \right)Q_{\ell +\frac{3}{2}+t} ^{-\frac{1}{2}}
\left(x\right)\,,
\end{eqnarray*}
respectively, where
\begin{eqnarray}
x&=&\frac{1}{2}\left[\left( \frac{1}{2}\right)^2 + \left(\frac{p_r}{2\hbar \beta}\right)^2\right]^{-\frac{1}{2}}.
\label{z_argmnt}
\end{eqnarray}
  Here,  $P_\nu^{-\mu}(x)$ and $Q_\nu^{-\mu}(x)$ denote the associated Legendre functions of the first  and the second kind, 
characterized by their order $\nu$, and degree $\mu$, with ${\mathcal R}e (\nu +\mu)>0$ (they are also known as Ferrers functions of the first and second kind \cite{Cohl}).
Thus, we arrive at the following result.
\begin{lemma}
The radial momentum wave functions $\psi_{N\ell}(p_r)$ can  be expressed as the following finite series expansion:
\begin{eqnarray*}
\psi_{N\ell}(p_r)=&&\frac{p_r}{\hbar}{\mathcal N}_{N\ell}\sum_{t=0}^{N-\ell -1}(-1)^t\left( \begin{array}{c} N+\ell \\
N-\ell -1 -t\end{array}\right)\frac{1}{{t! (2\beta)^3}}(2x)^{(\ell +\frac{5}{2}+t)}
\nonumber\\
&\times& \sqrt{\frac{\beta \hbar\pi }{p_r}}\Gamma (\ell +3+t)\left(P^{-\frac{1}{2}}_{\ell +\frac{3}{2}+t}(x)
+i\frac{2}{\pi }Q_{\ell +\frac{3}{2} +t}^{-\frac{1}{2}
}(x)\right)\,,
\end{eqnarray*}
with $x$ defined in (\ref{z_argmnt}).
\end{lemma}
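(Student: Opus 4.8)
The plan is to assemble the claimed formula directly from the reduction carried out earlier in this subsection, since the genuine analytic content has already been isolated into two tabulated integrals; what remains is to combine the pieces and track the constants. First I would recall that, by the definition of the transform $\mathcal{H}$ in \eref{eq11} together with the splitting $h^{(2)}_0=j_0-in_0$, the wave function decomposes as in \eref{Ur7_pr}, namely $\psi_{N\ell}(p_r)=\frac{p_r}{\hbar}\mathcal{N}_{N\ell}\bigl(I_1-iI_2\bigr)$, so it suffices to produce closed forms for $I_1$ and $I_2$ separately and recombine them.

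Second, I would insert the explicit Laguerre expansion \eref{Ur4} into the integrals \eref{Int_1st} and \eref{Ur6}. Because the generalized Laguerre polynomial is a \emph{finite} sum of monomials $\rho^t$, each of $I_1$ and $I_2$ becomes a finite linear combination of elementary Slater-type integrals $\int_0^\infty j_0 e^{-\rho/2}\rho^{\ell+t+2}\,\mathrm{d}\rho$, respectively with $n_0$, the coefficients being exactly the binomial and factorial factors displayed in \eref{Ur7_a} and \eref{Ur8_a}. The interchange of a finite sum with the integral needs no justification beyond linearity.

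Third, which is the only substantive step, I would pass from the zeroth-order spherical Bessel functions to the half-integer cylindrical ones via $j_0(z)=\sqrt{\pi/2z}\,J_{1/2}(z)$ and $n_0(z)=\sqrt{\pi/2z}\,Y_{1/2}(z)$. With $z=\frac{p_r}{2\hbar\beta}\rho$, the prefactor $\sqrt{\pi/2z}$ produces precisely the factor $\sqrt{\beta\hbar\pi/p_r}$ appearing in the statement and lowers the power of $\rho$ by one half, so the surviving integrals are exactly the Hankel-type transforms $\mathcal{I}_1$ and $\mathcal{I}_2$. These are tabulated in \cite{RyzGrad}, formulae \textbf{6.621} and \textbf{6.622}; evaluating them yields the associated Legendre functions $P^{-1/2}_{\ell+3/2+t}(x)$ and $Q^{-1/2}_{\ell+3/2+t}(x)$ with $x$ as in \eref{z_argmnt}, together with the gamma factor $\Gamma(\ell+3+t)$ and the powers $(2x)^{\ell+5/2+t}$.

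Finally, substituting these closed forms back into \eref{Ur7_pr} and collecting the common prefactors gives the stated finite series. The main obstacle is bookkeeping rather than analysis: one must match the normalization conventions of the tabulated transforms exactly, the key point being that the factor $-\frac{2}{\pi}$ accompanying $Q^{-1/2}$ in $\mathcal{I}_2$ is what combines with the $-i$ multiplying $I_2$ in \eref{Ur7_pr} to produce the $+i\frac{2}{\pi}$ in front of $Q^{-1/2}$ in the statement. One should also check that the parameters lie in the range where \cite{RyzGrad} guarantees the formulae, i.e. the admissibility condition $\mathrm{Re}(\nu+\mu)>0$ with order $\nu=\ell+\frac{3}{2}+t$ and degree $\mu=\frac{1}{2}$, so that $\nu+\mu=\ell+2+t\geq 2>0$ throughout the summation range $0\leq t\leq N-\ell-1$.
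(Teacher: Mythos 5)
Your proposal is correct and follows exactly the paper's route: the paper's own proof is a one-line back-substitution of the tabulated Hankel transforms $\mathcal{I}_1$, $\mathcal{I}_2$ (Gradshteyn--Ryzhik \textbf{6.621}, \textbf{6.622}) into \eref{Ur7_a} and \eref{Ur8_a} and then into \eref{Ur7_pr}, with the Laguerre expansion \eref{Ur4} and the conversion $j_0(z)=\sqrt{\pi/2z}\,J_{1/2}(z)$, $n_0(z)=\sqrt{\pi/2z}\,Y_{1/2}(z)$ having been carried out in the preceding text just as you describe. Your sign bookkeeping (the $-\frac{2}{\pi}$ in $\mathcal{I}_2$ combining with the $-i$ in \eref{Ur7_pr} to give $+i\frac{2}{\pi}Q^{-1/2}$) and your check of the admissibility condition $\mathrm{Re}(\nu+\mu)=\ell+2+t>0$ are both accurate and consistent with the paper.
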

\begin{proof}
The statement follows by back substituting ${\mathcal I}_1\left( \frac{p_r}{2\hbar \beta}\right)$ and ${\mathcal I}_2\left(\frac{p_r}{2\hbar \beta} \right)$ in  \eqref{Ur7_a}  and \eqref{Ur8_a}, respectively, and taking into account \eqref{Ur7_pr}.
\end{proof}

Now, using the relations of $P_\nu^{-\mu}(x), Q_\nu^{-\mu}(x)$ with  $C_\gamma^\delta(x),D_\gamma^\delta(x)$, the Gegenbauer functions of the first and second kind respectively 
(see eq. (61) in  \cite{Durand}, and  eq.~(1.13) in \cite{HvH}, here transcribed in the notation of \cite{Durand}),

\begin{eqnarray*}
P_\nu^{-\mu}(x)&=&\frac{2^\mu}{\sqrt{\pi}}\frac{
\Gamma\left(\mu +\frac{1}{2}\right)\Gamma \left(\nu -\mu +1 \right)}
{\Gamma \left( \nu +\mu +1\right)}
(1-x^2)^{\frac{\mu }{2}}C^{\mu +\frac{1}{2}}_{\nu-\mu}(x)\,,
\\
Q_\nu^{-\mu}(x)&=& 2^{\mu }\frac{\sqrt{\pi }}{2}
\frac{\Gamma \left(\mu+\frac{1}{2}\right)\Gamma\left(\nu -\mu +1 \right)}
{\Gamma\left( \nu +\mu +1\right)} (1-x^2)^{\mu/2} D^{\mu +1/2}_{\nu -\mu}(x)\,,
\end{eqnarray*}
(in their respective intervals $x\in[-1,1]$, and $x\in (-1,1)$), we can rewrite
the expression for $\psi_{N\ell}(p_r)$. To this end, we take $\mu=1/2$, $\nu =\ell +t +3/2$
in the formulae above, to find

\begin{equation}\label{eqn_wtht_nmbr}
\psi_{N\ell}(p_r)=
\sum_{t=0}^{N-\ell -1}a^{(N)}_{\ell t}\, 
\sqrt{1-x^2}\, x^{(\ell +2 +t)}\left(  C^{1}_{\ell +1+t}(x)
+i  D^{1}_{\ell +1+t}(x)\right)
\end{equation}
where

\[
a^{(N)}_{ \ell t } ={\mathcal N}_{N\ell}2^{\ell +t +2}
\left( -1\right)^t \left( \begin{array}{c}N+\ell\\
N -\ell -1 -t\end{array}\right)\frac{1}{t!(2\beta)^2}
\Gamma(\ell +t+2)\,.
\]

Finally, by making a further change of variables,
\begin{eqnarray}
x&:=&\cos\gamma , \quad (1-x^2):=\sin^2\gamma ,
\label{trgnmtrc_nttns}
\end{eqnarray}
we arrive at the following result.

\begin{theorem}\label{thm2}
With the preceding notations, the radial momentum wave functions $\psi_{N\ell}(p_r)$ can  be expressed as the following finite series expansion:
\[
\psi_{N\ell}(p_r)=
\sum_{t=0}^{N-\ell -1}a^{(N)}_{ \ell t }  \sin \gamma \cos^{\ell +2 +t}\gamma 
\left(  C^{1}_{\ell +1+t}(\cos\gamma  )
+i D^{1}_{\ell +1+t}(\cos\gamma)\right)\,.
\]
\end{theorem}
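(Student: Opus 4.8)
The plan is to read off the stated series directly from the (unnumbered) display \eqref{eqn_wtht_nmbr}, which already expresses $\psi_{N\ell}(p_r)$ as a finite sum built from the factor $\sqrt{1-x^2}$, the power $x^{\ell+2+t}$, and the Gegenbauer functions $C^1_{\ell+1+t}(x)$ and $D^1_{\ell+1+t}(x)$, by implementing the trigonometric substitution \eqref{trgnmtrc_nttns}. In other words, essentially all the analytic content has already been discharged upstream: the evaluation of the two Hankel-type integrals $\mathcal{I}_1$ and $\mathcal{I}_2$ in terms of associated Legendre functions (the Lemma), and their conversion into Gegenbauer functions of the first and second kind through the stated identities with $\mu=1/2$, $\nu=\ell+t+3/2$. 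What remains for Theorem \ref{thm2} is a pure change of variable.

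First I would pin down the admissible range of the new variable $\gamma$. From \eqref{z_argmnt}, for real $p_r$ the argument $x$ sweeps over $(0,1]$, with $x=1$ at $p_r=0$ and $x\to 0^+$ as $|p_r|\to\infty$. Setting $x=\cos\gamma$ therefore forces $\gamma\in[0,\pi/2)$, an interval on which $\sin\gamma\ge 0$. This is the one point that makes the substitution legitimate rather than merely formal: the relation $1-x^2=\sin^2\gamma$ must be square-rooted, and the \emph{positive} root $\sqrt{1-x^2}=\sin\gamma$ is the correct choice precisely because $\sin\gamma\ge 0$ on this interval.

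Next I would substitute termwise. In each summand of \eqref{eqn_wtht_nmbr} the replacement is immediate: $\sqrt{1-x^2}\mapsto\sin\gamma$, $x^{\ell+2+t}\mapsto\cos^{\ell+2+t}\gamma$, and the two Gegenbauer functions merely have their argument relabelled, $C^1_{\ell+1+t}(x)\mapsto C^1_{\ell+1+t}(\cos\gamma)$ and likewise $D^1_{\ell+1+t}(x)\mapsto D^1_{\ell+1+t}(\cos\gamma)$. The coefficients $a^{(N)}_{\ell t}$ do not depend on $x$, so they pass through untouched. Reassembling the sum reproduces exactly the claimed series, which completes the argument.

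I expect no genuine analytic obstacle here; the only place a careless version could go astray is the sign of the square root treated above. I would also remark, for completeness, that the Gegenbauer function of the second kind $D^1_{\ell+1+t}$ is defined only on the open interval $x\in(-1,1)$, so the endpoint $x=1$ (that is, $\gamma=0$, corresponding to $p_r=0$) should be read as a limiting value; on $(0,1)$ the identity is an exact equality of analytic functions, and continuity then covers the endpoint.
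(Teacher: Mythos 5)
Your proposal is correct and follows essentially the same route as the paper, which obtains Theorem \ref{thm2} precisely by applying the substitution \eqref{trgnmtrc_nttns} to the display \eqref{eqn_wtht_nmbr}, all the analytic work having been done in the Lemma and the Legendre-to-Gegenbauer conversion with $\mu=1/2$, $\nu=\ell+t+3/2$. Your added care about the range $x\in(0,1]$ (hence $\gamma\in[0,\pi/2)$, justifying $\sqrt{1-x^2}=\sin\gamma$) and the limiting interpretation at $x=1$ where $D^1_{\ell+1+t}$ is undefined is a welcome refinement the paper leaves implicit.
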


Notice that this compact closed form of the radial momentum functions shows 
that, as in the case of the functions of Podolsky and Pauling, the functions
presented here are expressible  in terms of a $so(2,1)\sim su(1,1)$ basis.

As shown in \cite{HvH}, the combination, $C_\nu^\lambda(x)+iD_\nu^\lambda (x) $, defines a further Gegenbauer function of the second kind,  ${\mathcal D}_\nu^\lambda (x+i0)$, now of complex argument, by putting
\begin{eqnarray}
C_\nu^\lambda(x)+iD_\nu^\lambda(x)&=& 2{\mathcal D}_\nu^\lambda (x+i0),
\end{eqnarray}
where the notation $i0$, means the limit with respect to $\epsilon \to 0 $ in the
corresponding function with the variable $x+i\epsilon$.
With that, the expression for $\psi_{N\ell}(p_r)$ can be written even more concisely.

\begin{corollary}\label{cor1}
With the preceding notations, the radial momentum wave functions $\psi_{N\ell}(p_r)$ can  be expressed as
\begin{eqnarray}
&&\psi_{N\ell}(p_r)=
\sum_{t=0}^{N-\ell -1}2 a^{(N)}_{\ell t}\,\sqrt{1-x^2}\,  x^{(\ell +2 +t)}{\mathcal D}^1_{\ell +t +1}(x+i0),
\label{Ergnzng_1}
\end{eqnarray}
\end{corollary}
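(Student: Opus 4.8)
The plan is to obtain this corollary as an immediate consequence of the intermediate expression \eqref{eqn_wtht_nmbr} together with the complex-combination identity from \cite{HvH}. The work of passing from the Laguerre/Hankel representation to the Gegenbauer form has already been carried out in the Lemma and in the derivation leading to \eqref{eqn_wtht_nmbr}, so what remains is purely a matter of recognizing that the real combination $C^1_{\ell+1+t}(x)+iD^1_{\ell+1+t}(x)$ appearing under the summation sign is precisely twice a single Gegenbauer function of the second kind of complex argument.

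Concretely, I would start from \eqref{eqn_wtht_nmbr},
\[
\psi_{N\ell}(p_r)=
\sum_{t=0}^{N-\ell -1}a^{(N)}_{\ell t}\,
\sqrt{1-x^2}\, x^{(\ell +2 +t)}\left(  C^{1}_{\ell +1+t}(x)
+i  D^{1}_{\ell +1+t}(x)\right),
\]
and then invoke the identity $C_\nu^\lambda(x)+iD_\nu^\lambda(x)= 2{\mathcal D}_\nu^\lambda (x+i0)$ specialized to the parameter values $\lambda=1$ and $\nu=\ell+t+1$, which hold termwise for every index $t$ in the sum. Substituting this into each summand and pulling the factor of $2$ in front of the coefficient $a^{(N)}_{\ell t}$ yields exactly \eqref{Ergnzng_1}. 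Since the summation index $t$ ranges over $0\le t\le N-\ell-1$, the order satisfies $\nu=\ell+t+1\ge\ell+1\ge 1$ throughout, so the combination $\mathcal{D}^1_{\ell+t+1}(x+i0)$ is well defined on the relevant interval.

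There is no substantial obstacle here: the proof is a one-line substitution. The only point requiring a modicum of care is to confirm that the combination identity of \cite{HvH} is applicable at the specific $(\nu,\lambda)=(\ell+t+1,1)$ appearing in our sum, and in particular that the boundary limit $x+i0$ (the $\epsilon\to 0^+$ limit in the variable $x+i\epsilon$) is the one relevant for the argument $x$ defined in \eqref{z_argmnt}, which takes values in the appropriate interval for all admissible $p_r$. Granting that, the displayed expression follows directly and the corollary is established.
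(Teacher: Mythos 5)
Your proposal is correct and follows exactly the route the paper takes: the corollary is obtained from \eqref{eqn_wtht_nmbr} (equivalently Theorem \ref{thm2}) by substituting, termwise in $t$, the identity $C_\nu^\lambda(x)+i D_\nu^\lambda(x)=2\,\mathcal{D}_\nu^\lambda(x+i0)$ of \cite{HvH} with $\lambda=1$, $\nu=\ell+t+1$, and extracting the factor of $2$. Your added remark about checking the applicability of the identity at these parameter values and the $x+i0$ boundary limit for $x$ as in \eqref{z_argmnt} is a sensible precision that the paper leaves implicit.
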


Notice that the use of Slater-type functions in (\ref{Ur7_a}) above,  
can cause numerical instabilities of the solutions (due to the alternating sign of the 
$(-1)^t$ factor) when the  summation extends to very high $(N-\ell +1)$ values, a critical  
remark already expressed in \cite{Weniger}. However, contrary to the case of the Fourier 
transforms of the radial wave functions considered in \cite{PdlPlng}, \cite{Weniger}, 
\cite{Yukcu}, in the present case we can not avoid this problem by invoking  
the generating functions of the Laguerre and Gegenbauer  polynomials. The reason for this is 
that, following this path, the hypergeometric function which would appear instead of $_2F_1$  in eq.~(20) in \cite{Weniger}, can not be transformed to the  binomial series (as $_1F_0$, in eq.~(21) in \cite{Weniger} does).

\subsection{Alternative finite complex  expansions in terms of elementary functions obtained  by means of Fourier sine and cosine transforms}
A simpler form for the radial momentum wave functions can be obtained by expressing the spherical Bessel function as  $j_0(x)=\sin x/x$  and converting
 $I_1\left( \frac{p_r}{2\hbar \beta }\right)$ in (\ref{Int_1st}) to,
\begin{eqnarray}
I_1\left(\frac{p_r}{2\hbar \beta} \right)&=&\int_0^\infty j_0\left( \frac{p_r}{2\hbar \beta }\rho \right)R_{N\ell }(\rho) r^2{\mathrm d}r\nonumber\\
&=&\frac{\hbar }{p_r}\frac{1}{(2\beta)^2}\int_0^\infty \sin \left( \frac{p_r}{2\hbar\beta  }\rho \right)R_{N\ell }(\rho) \rho {\mathrm d}\rho\,.\label{New_1}
\end{eqnarray}
Inserting above the expression for $R_{N\ell}(r)$ from 
(\ref{Ur1}), and taking into account (\ref{Ur4}), we get,

\begin{eqnarray}
I_1\left( \frac{p_r}{2\hbar \beta }\right)=\frac{\hbar }{p_r}\frac{1}{(2\beta)^2}&&\sum_{t=0}^{N-\ell -1} (-1)^t\frac{1}{t!}\left(\begin{array}{c}N+\ell \\ N -\ell -1 -t\end{array} \right)
\nonumber\\
&\times& \int_0^\infty \rho^{\ell+t+1} e^{-\frac{\rho}{2}}\sin  \left(\frac{p_r}{2\hbar\beta  }\rho \right)  {\mathrm d}\rho\, .
\label{New_2}
\end{eqnarray} 
Similarly, $n_0(x)$ in (\ref{Ur6}) can be expressed as $n_0(x)=-\cos x/x$, leading to
\begin{eqnarray*}
I_2\left(\frac{p_r}{2\hbar \beta } \right)=-\frac{\hbar }{p_r}\frac{1}{(2\beta)^2}&&\sum_{t=0}^{N-\ell -1} (-1)^t\frac{1}{t!}\left(\begin{array}{c}N+\ell\\ N-\ell -1 -t\end{array} \right)
\nonumber\\
&\times&\int_0^\infty \rho^{\ell+t+1} e^{-\frac{\rho}{2}}\cos \left(  \frac{p_r}{2\hbar\beta  }\rho\right)   {\mathrm d}\rho\,.
\end{eqnarray*} 
Integrals of the form (\ref{New_1}) and (\ref{New_2}), are standard Fourier  $\sin$ and $\cos$ transforms, which are well known  known  (see  {\bf 860.89} and {\bf 800.99} in \cite{Dwight}) and read:

\begin{eqnarray*}
\int_0^\infty x^ne^{-\alpha x}\sin bx\,  {\mathrm d}x
&=&\frac{\Gamma (n+1)\sin(n+1) \theta}{\left( \alpha^2 +b^2\right)^{\frac{n+1}{2}}},\\
\int_0^\infty x^ne^{-\alpha x}\cos bx\,  {\mathrm d}x
&=&\frac{\Gamma (n+1)\cos (n+1)\theta }{(\alpha ^2 +b^2)^{\frac{n+1}{2}}},
\\
\sin \theta =\frac{b}{ \left( \alpha ^2 +b^2\right)^{\frac{1}{2}}}, &\quad&
\cos\theta =\frac{\alpha }{ \left( \alpha ^2 +b^2\right)^{\frac{1}{2}}},\quad \theta=\tan^{-1}\frac{b}{\alpha}.
\end{eqnarray*}

\begin{theorem}\label{thm3}
With the preceding notations, the radial momentum wave functions $\psi_{N\ell}(p_r)$ can  be expressed as:
\begin{eqnarray}
\psi_{N\ell}(p_r)&=&\sum_{t=0}^{N-\ell -1}b^{(t)}_{N\ell}\frac{
e^{i(\ell +t +2) \tan^{-1}\left(\frac{p_r}{\hbar \beta }\right) }\left(\frac{1}{2}\right)^{\ell +t+2}}
{\left(
\sqrt{ \left( \frac{1}{2}\right)^2 + \left(
      \frac{p_r}{2\hbar \beta}\right)^2}\right)^ {\ell +t+2}},\label{prime} \\
&=&\sum_{t=0}^{N-\ell -1} b^{(t)}_{N\ell} e^{i(\ell +t +2) \theta }\cos^{\ell +t +2}\theta,\label{same_trg}
\end{eqnarray}
where we have introduced
\begin{eqnarray*}
b^{(t)}_{N\ell}&=&{\mathcal N}_{N\ell}2^{\ell +t +2}
\frac{1}{(2\beta)^2}(-1)^t\frac{1}{t!}\left(\begin{array}{c}N+\ell\\ N-\ell -1 -t\end{array} \right)\Gamma(\ell +t +2),.
\end{eqnarray*}
and
\[
\theta = \tan^{-1}\left(\frac{p_r}{\hbar \beta }\right)\,.
\]

\end{theorem}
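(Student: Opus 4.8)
The plan is to substitute the two tabulated Fourier transforms quoted just above the statement into the series \eqref{New_2} for $I_1$ and into its cosine analogue for $I_2$, and then to assemble the combination $\tfrac{p_r}{\hbar}\mathcal{N}_{N\ell}(I_1-iI_2)$ of \eqref{Ur7_pr}. In each $\rho$-integral the relevant identifications are $n=\ell+t+1$, damping rate $\alpha=\tfrac12$ and frequency $b=\tfrac{p_r}{2\hbar\beta}$; since $\mathrm{Re}\,\alpha>0$ the transforms converge and may be applied term by term. With these choices $n+1=\ell+t+2$, and the angle of the tabulated formulae is $\theta=\tan^{-1}(b/\alpha)=\tan^{-1}(p_r/\hbar\beta)$, which is exactly the $\theta$ appearing in the statement.

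First I would carry out the substitution, so that the sine transform turns the $t$-th term of $I_1$ into a multiple of $\sin\big((\ell+t+2)\theta\big)$ and the cosine transform turns the $t$-th term of $I_2$ into a multiple of $\cos\big((\ell+t+2)\theta\big)$, each divided by $(\alpha^2+b^2)^{(\ell+t+2)/2}$. The prefactor $\hbar/p_r$ produced by the rewriting \eqref{New_1}--\eqref{New_2} cancels against the $p_r/\hbar$ in \eqref{Ur7_pr}. Forming $I_1-iI_2$ then recombines, term by term, the sine and the cosine into a single complex exponential of argument $(\ell+t+2)\theta$ through Euler's formula; at this step the overall imaginary prefactor carried by $h^{(2)}_0$, which (as announced at the beginning of the section) is discarded as an irrelevant global phase, drops out.

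Next I would gather the $t$-independent constant $\mathcal{N}_{N\ell}(2\beta)^{-2}$ together with the $t$-dependent factors $\Gamma(\ell+t+2)$, $(-1)^t/t!$ and $\binom{N+\ell}{N-\ell-1-t}$ into a single coefficient; rewriting it as $b^{(t)}_{N\ell}\,(\tfrac12)^{\ell+t+2}$ --- which, upon inserting the compensating $2^{\ell+t+2}$, reproduces precisely the $b^{(t)}_{N\ell}$ of the statement --- leaves the quotient $(\tfrac12)^{\ell+t+2}/(\alpha^2+b^2)^{(\ell+t+2)/2}$ with $\alpha^2+b^2=(\tfrac12)^2+(p_r/2\hbar\beta)^2$, and this is exactly \eqref{prime}. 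The passage from \eqref{prime} to the trigonometric form \eqref{same_trg} is then immediate from $\cos\theta=\alpha/\sqrt{\alpha^2+b^2}=(\tfrac12)/\sqrt{(\tfrac12)^2+(p_r/2\hbar\beta)^2}$, so that the whole quotient collapses to $\cos^{\ell+t+2}\theta$.

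The bookkeeping of the Gamma factors and binomials into $b^{(t)}_{N\ell}$ is routine. The one genuinely delicate point --- and the main obstacle --- is the phase accounting in the combination $I_1-iI_2$: one must track both the $i$ hidden in $n_0(x)=-\cos x/x$ and in $h^{(2)}_0$, and the incoming-versus-outgoing sign convention of the spherical wave \eqref{eq10}, since together these fix the sign of the exponent in $e^{i(\ell+t+2)\theta}$ as well as the global phase being discarded. Once this convention is pinned down consistently with Theorem \ref{theorem1}, both displayed expressions follow.
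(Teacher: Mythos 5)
Your proposal is correct and takes essentially the same route as the paper's own proof: it substitutes the tabulated Fourier sine and cosine transforms with $n=\ell+t+1$, $\alpha=\tfrac{1}{2}$, $b=p_r/(2\hbar\beta)$ into \eqref{Ur7_pr}, collects the constants into $b^{(t)}_{N\ell}$, and obtains \eqref{same_trg} from $\cos\theta=\alpha/\sqrt{\alpha^2+b^2}$. Your remark on the delicacy of the phase accounting is well placed, since the paper itself passes from the combination $\sin\bigl((\ell+t+2)\theta\bigr)+i\cos\bigl((\ell+t+2)\theta\bigr)$ to $e^{i(\ell+t+2)\theta}$ only under its announced convention of discarding the imaginary prefactor of $h^{(2)}_0$.
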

\begin{proof}
Taking into account the preceding results, and making $n=\ell +t+1$, $ \alpha=1/2$, and $b=p_r/(2\hbar \beta)$, the function $\psi_{N\ell}(p_r)$ in (\ref{Ur7_pr}) takes the form,
\begin{eqnarray*}
\psi_{N\ell}(p_r)&=& {\mathcal N}_{N\ell}
\frac{1}{(2\beta)^2}\sum_{t=0}^{N-\ell -1} (-1)^t\frac{1}{t!}\binom{N+\ell}{N-\ell -1 -t}\Gamma(\ell +t +2)\nonumber\\
&\times &\left(\frac{\sin (\ell +t +2) \tan^{-1}\left(\frac{p_r}{\hbar \beta }\right)  }
{\left( \left( \frac{1}{2}\right)^2 + \left(\frac{p_r}{2\hbar \beta}\right)^2\right)^ {\frac{\ell +t+2}{2}}}
+i \frac{\cos (\ell +t +2) \tan^{-1}\left(\frac{p_r}{\hbar \beta } \right)
}
{\left( \left( \frac{1}{2}\right)^2 + \left(\frac{p_r}{2\hbar \beta}\right)^2\right)^ {\frac{\ell +t+2}{2}}}
\right)\,,
\end{eqnarray*}
from which the statement follows.
\end{proof}

The functions in (\ref{prime}), (\ref{same_trg}) are obviously equivalent to the \eqref{Ergnzng_1} in Corollary \ref{cor1}, because both are obtained from the same integrals (\ref{Int_1st}), and (\ref{Ur6}), just computed using a different technique. This is also visible from comparing their real and imaginary parts 
upon making use of relations
\begin{eqnarray}
C_n^1(\cos \theta )=\frac{\sin (n+1)\theta }{\sin \theta}, &\quad& 
D_n^1(\cos\theta) =\frac{\cos (n+1)\theta }{\sin \theta},
\quad \theta=\gamma,
\label{special_values}
\end{eqnarray}
for $n=\ell +t+1$. The latter are obtained upon substitution of the eqs.~(14.5.12) and (14.5.14) from \cite{NIST} into the expressions appearing in the above proof to Lemma 1. 

Two interesting properties are:
\begin{enumerate}
\item The functions are continuous,
finite at the origin, and approach zero at infinity. Therefore, they are bounded continuous and
hence square integrable. 
\item The denominator in (\ref{prime})  can be rewritten using,
\[
 \left( \frac{1}{2}\right)^2 +\left( \frac{p_r}{2\hbar \beta}\right)^2=\frac{1}{(2\hbar \beta)^2 }\left(
\hbar^2\beta^2 +p_r^2\right)=\frac{1}{(2\hbar \beta)^2 }\left(
  \frac{{\mathcal E}_N^2}{c^2} +p_r^2\right)\,,
\]
(where ${\mathcal E}_N^2=-2\mu c^2  E_N $). This is a manifestly
$SO(4)-$symmetric expression in an Euclidean four-dimensional space, in which $p_r$ is interpreted as the radius of a two-dimensional sphere, while ${\mathcal E}_N^2/c^2+p_r^2=s_N^2$ could be viewed as the squared  radius of a three-dimensional sphere. Moreover, recalling that 
\begin{eqnarray}
\tan^{-1}\frac{p_r}{\hbar \beta }&=&\cos^{-1} \frac{\left(\frac{1}{2}\right)}{\sqrt{
\left( \frac{1}{2}\right)^2 +\left(\frac{p_r}{2\hbar \beta } \right)^2}}=\theta,
\label{Knall}
\end{eqnarray} 
it becomes apparent that, as was the case of the Podolsky-Pauling wave functions, the functions in \eqref{same_trg} are $SO(4)-$symmetric.
\end{enumerate}

In particular, the second property shows that the well known $SO(4)$ symmetry of the Hydrogen
atom in the position representation, is also present in the momentum representation, which is
a consistency test. 

\subsection{Equivalence with the functions of Lombardi-Ogilvie}

As mentioned earlier, the greatest drawback of Lombardi-Ogilvie functions \eqref{Olg_33}
is that they are not manifestly $SO(4)-$invariant. Our functions \eqref{prime}, \eqref{same_trg}, \eqref{Ergnzng_1} do present that invariance, and now we are going to
show that they are equivalent to those in \eqref{Olg_33}, thus curing that problem.

We start by considering again the radial Schr\"odinger equation (we work in $\hbar =1$ units):
\begin{equation*}
\left[\frac{\hat{p}^2_r}{2\mu}+\frac{\ell (\ell+1)}{2\mu r^2}-\frac{Ze^2}{r}\right] R_{N\ell}(r)
=E_NR_{N\ell}(r)\,.
\end{equation*}
If we apply the integral transform $\mathcal{H}$ in Theorem \ref{theorem1} to this equation,
and write $\psi_{N\ell}(p_r)$ for the transform of $R_{N\ell}(r)$, we arrive at
\begin{equation}\label{equivalence01}
(p_r^2-2\mu E)\psi_{N\ell}(p_r) -2\mu Ze^2 \int_0 ^{\infty }\frac{e^{ip_rr}}{r}\frac{R_{N\ell }(r)}{r}r^2 {\mathrm d}r
+\ell(\ell +1) \int_0 ^{\infty }\frac{e^{ip_rr}}{r}\frac{R_{N\ell }(r)}{r^2}
r^2 {\mathrm d}r =0\,.
\end{equation}

We want to show that this equation is equivalent to the defining equation of the 
Lombardi-Ogilvie functions \eqref{Lmbrd1}, repeated here for the reader's 
convenience:
\[
{\Big(}p_r^2-2\mu E  +2\mu Ze^2 I +\ell (\ell+1)I^2  {\Big)}\alpha_{N\ell}(p_r)=0\,,
\]
where, recall, $I$ denotes the integral operator\footnote{For an analysis of this operator, see
\cite{Magyari}, p. 134, Problem 19.} in \eqref{2nd_rplsmnt},
\[
I=-i\int_{-\infty}^{p_r} {\mathrm d}{p^\prime _r}\,.
\]

In their setting, Lombardi and Ogilvie do not specify the domain of this operator, but for
sufficiently regular functions (in particular, for those in the dense subspace
$\mathcal{C}^\infty_0(\mathbb{R})$),
we can take derivatives with respect to $p_r$ in \eqref{equivalence01}, and evaluate the 
following integrals:

\begin{eqnarray}
-i\frac{{\mathrm d}}{{\mathrm d}p_r}\int_0 ^{\infty }\frac{e^{ip_rr}}{r}\frac{1}{r}R_{N\ell }(r)r^2 {\mathrm d}r&=&
\int_0 ^{\infty }\frac{e^{ip_rr}}{r}R_{N\ell }(r)r^2 {\mathrm d}r\nonumber=\psi_{N\ell}(p_r)\\
&=& -i \frac{{\mathrm d}}{{\mathrm d}p_r }\int _{-\infty}^{p_r}{\mathrm d}p_r^\prime 
\alpha_{N\ell}(p_r^\prime ) = \alpha_{N\ell}(p_r),\nonumber\\
\label{waw}
\end{eqnarray}
and
\begin{eqnarray}
&&\left( -i \frac{{\mathrm d}}{{\mathrm d}p_r}\right)
\left( -i \frac{{\mathrm d}}{{\mathrm d}p_r}\right)
\int_0^\infty \frac{e^{ip_rr}}{r}
\frac{1}{r^2}R_{N\ell}(r) r^2{\mathrm d}r=\int_0^\infty 
\frac{e^{ p_r r}}{r}R_{N\ell }(r) r^2 {\mathrm d}r\nonumber\\
&=&\psi_{N\ell}(p_r)=\left(-i \frac{{\mathrm d} }{{\mathrm d}p_r} \right)
\left(-i \frac{{\mathrm d} }{{\mathrm d}p_r^\prime } \right)
\int_{-\infty }^{p_r}
{\mathrm d}p_r^\prime \int_{-\infty }^{p_r^\prime}
{\mathrm d}p_r^{\prime \prime} 
\alpha_{N\ell }(p_r^{\prime \prime})\nonumber\\
&=& \alpha_{N\ell}(p_r),
\label{waw_m}
\end{eqnarray}
which follow from the definition of $\psi_{N\ell}(p_r)$ as an integral transform, in 
combination with the fundamental theorem of the integral calculus.
The above considerations show that our method has the advantage over eq.~(\ref{Lmbrd1})
of being generalizable to any radial potential and any dimensions, in which case one will encounter different integrals, possibly multidimensional ones.

\subsection{Comparison between the functions of Podolsky-Pauling and Lombardi-Ogilvie}\label{sec44}
A comparison between the  momentum space wave functions of Podolsky and Pauling, on the one side, with those of Lombardi and Ogilvie, on the other, is in order. For the sake of transparency, we here limit ourselves to compare the corresponding profiles of the momentum distributions, and ignore normalization constants.
Also, for illustrative purposes, we select some quantum numbers, for which the expressions look particularly simple. To be specific, below we compare the respective momentum distributions for $\ell= N-1$.

Modulo normalization constants, the Podolsky-Pauling distributions for maximal $\ell$ values  corresponding to (\ref{Szmytk}), read:
\begin{equation}
|{\mathcal G}_{N, N-1}(p)|^2\sim \frac{\left(4\hbar \beta p\right)^{2(N-1)}}
{\left(\hbar^2\beta^2 +p^2\right)^{2(N+1)}}
\label{eqtn_1}
\end{equation}
For comparison, those of Lombardi-Ogilvie, in our parametrization (\ref{same_trg}) are given by,
\begin{equation}
|\psi_{N, N-1}(p_r)|^2\sim 
\frac{1}{\left(\hbar^2\beta^2+p_r^2\right)^{N+1}}
\label{eqnt_2}
\end{equation}

Graphically:
\begin{figure}[H]
\centering
\begin{subfigure}{.5\textwidth}
    \centering
    \includegraphics[width=0.8\textwidth]{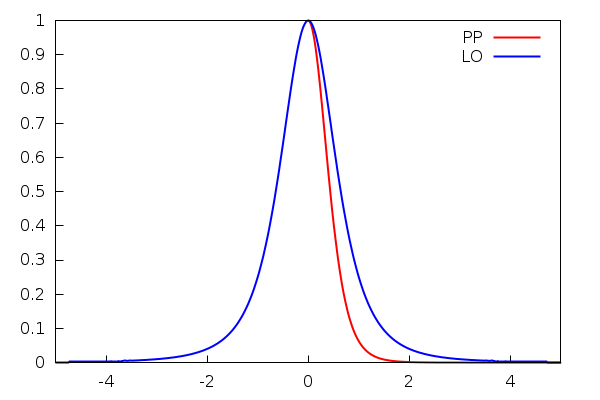}
\end{subfigure}%
\begin{subfigure}{.5\textwidth}
    \centering
    \includegraphics[width=0.8\textwidth]{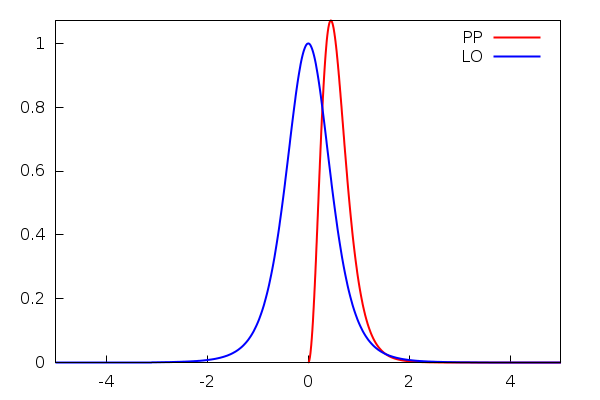}
\end{subfigure}
\begin{subfigure}{.5\textwidth}
    \centering
    \includegraphics[width=0.8\textwidth]{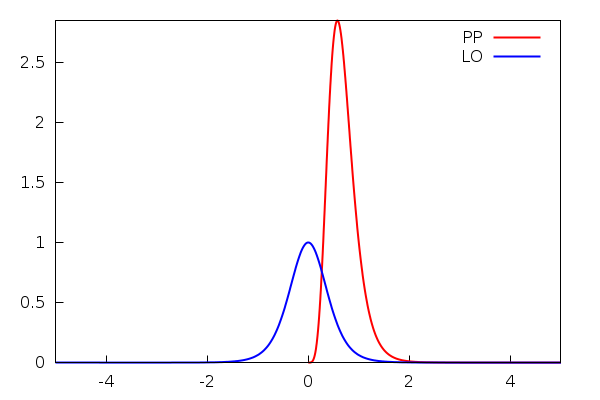}
\end{subfigure}%
\begin{subfigure}{.5\textwidth}
    \centering
    \includegraphics[width=0.8\textwidth]{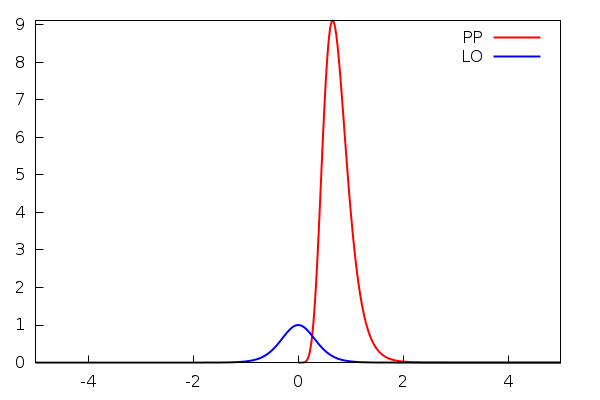}
\end{subfigure}
\caption[short]{Plots of the Podolsky-Pauling (PP) and Lombardi-Ogilvie (LO) non-normalized
momentum distributions, for $N=1,2,3,4$ (in order, left to right and top to bottom), in units 
$\hbar \beta =1$. The argument of 
the former is $p\in \,]0,\infty[\,$, while that of the latter is $p_r\in \,]-\infty,\infty[\,$.}
\end{figure}

According to (\ref{eqtn_1}), the Podolsky-Pauling momentum distribution for the ground state falls off as $ (1+p^2/\hbar^2\beta^2)^{-4}$, a fact experimentally confirmed by \cite{Lohmann}. 
This can be explained by noticing that in the space spanned by the radial Hydrogen atom wave functions and their Fourier transforms to momentum space the following uncertainty relation holds valid:
\begin{eqnarray}
\langle r^2\rangle \langle p^2\rangle \geq \frac{D^2}{4},
\label{Virial}
\end{eqnarray} 
where $D$ is the dimensionality of the space, $D=3$ in our case (see \cite{Gadre})\footnote{Alternatively, such Heisenberg-like uncertainty relations can also been confirmed and generalized on the basis of  information-theoretic methods \cite{Angulo1}, \cite{Angulo2}.}. The inequality in (\ref{Virial})  encodes the complementary character of $r^2$ and $p^2$, and provides the explanation to the fact that the wave functions of Podolsky and Pauling are the ones appropriate for the description  of the $p^2$ expectation values  distribution in the Hydrogen atom states. However, if what we want is the distribution of the expectation values of the radial momentum, $\langle p_r\rangle$, then the Lombardi-Ogilvie wave functions should be more appropriate. 
Indeed, our conclusion is that the two sets of functions under discussion are complementary to each other. While the ones by Podolsky and Pauling test the distribution of linear momenta with arbitrary orientation, those of Lombardi and Ogilvie would test the distribution of strictly  radially aligned  momenta, and this could be experimentally confirmed by appropriate radial polarization diffraction experiments, along the lines of \cite{TMa}.

\section{Conclusions and perspectives}\label{sec5}

We propose a construction of the Hydrogen atom radial wave functions in momentum space from
an integral transform which generalizes the case of the free particle momentum representation 
(see Theorem \ref{theorem1}). The corresponding kernel is given by a spherical wave, instead of 
the traditional plane one, which comes from the consideration of the \emph{formal}
eigenfunctions of the symmetric radial momentum operator \eqref{eq07}. This operator,
although not self-adjoint, has the feature of satisfying Heisenberg's commutation rules with 
the radial coordinate.

The resulting radial wave functions are equivalent to those of Lombardi and Ogilvie 
\cite{Lombardi,Ogilvie}, and can be expressed either in terms of finite complex 
expansions of Gegenbauer functions of the first and second kind (Theorem \ref{thm2} 
and Corollary \ref{cor1}), or in terms of (elementary) trigonometric functions (Theorem 
\ref{thm3} in Section \ref{sec4}). The first representation reflects the $so(2,1)\sim su(1,1)$ Lie algebra symmetry 
properties of the momentum wave functions, while the second one shows explicitly properties
such as square integrability and $SO(4)$ symmetry, a feature that was absent from the 
Lombardi-Ogilvie functions.

The total wave functions of the Hydrogen atom in momentum space, if needed, can be obtained as 
products of  the radial functions presented here with the angular functions either of Lombardi 
and Ogilvie \cite{Ogilvie}, or with those of Liu, Tang and Xun in \cite{QHLiu}, giving
qualitatively different outcomes. Which is the correct one is a question that must be decided
by means of comparison with experimental data.

The wave functions expressed in momentum representation, could be of use in studying the radial 
diffraction of hydrogen beams  through lattices such as graphene \cite{Brand}, roaming of 
Hydrogen atoms in the decomposition of organic compounds  \cite{Roaming}, momentum 
probabilities for particles in a  Coulomb
well \cite{Riggs}, tunneling \cite{Morrone}, \cite{Jaffe}, and other phenomena, in which  
Heisenberg's uncertainty relation between the radial coordinate and the radial momentum is required, among them  possibly  questions of information entropies  \cite{Majernik}.

In addition, our approach may also be of interest to the construction of  Wigner's 
phase-space functions and their applications, for example, along the lines of \cite{Praxmeyer}.

\section*{References}

\end{document}